\pgfplotsset{compat=1.10}
\begin{document}
%
%\frontmatter          % for the preliminaries
%
\pagestyle{headings}  % switches on printing of running heads
%\addtocmark{Hamiltonian Mechanics} % additional mark in the TOC
%

\mainmatter              % start of the contributions
\title{Author Obfuscation using Generalised Differential Privacy}
\titlerunning{Author Obfuscation}  % abbreviated title (for running head)
%                                     also used for the TOC unless
%                                     \toctitle is used
%
\author{Natasha Fernandes, Mark Dras, Annabelle McIver}
\authorrunning{Natasha Fernandes et al.} % abbreviated author list (for running head)
\institute{Macquarie University, Sydney, Australia} 

\maketitle              % typeset the title of the contribution

\begin{abstract}
The problem of obfuscating the authorship of a text document has received little attention in the literature to date. Current approaches are ad-hoc and rely on assumptions about an adversary's auxiliary knowledge which makes it difficult to reason about the privacy properties of these methods. Differential privacy is a well-known and robust privacy approach, but its reliance on the notion of adjacency between datasets has prevented its application to text document privacy. However, generalised differential privacy permits the application of differential privacy to arbitrary datasets endowed with a metric and has been demonstrated on problems involving the release of individual data points. In this paper we show how to apply generalised differential privacy to author obfuscation by utilising existing tools and methods from the stylometry and natural language processing literature.
\keywords{generalised differential privacy, author obfuscation, word mover's distance}
\end{abstract}
%
%-------------------------------- INTRODUCTION -------------------------------------

\section{Introduction}
The proliferation of machine learning techniques and publicly available datasets has resulted in some exciting advances in data analytics. At the same time, some well-publicised privacy breaches, notably the AOL and Netflix examples, have created concerns about data privacy and the ability for privacy methods to protect against machine learning attacks. Differential privacy is a key privacy definition which has rapidly gained popularity due to its mathematical foundations and, importantly, its independence from assumptions about the external data sources available to adversaries. This makes it an important tool for protection of personal data in the face of machine learning adversaries trained on large-scale datasets.

Differential privacy relies on a notion of an `individual' in a dataset, under the assumption that datasets are structured into rows of individuals, and adversaries are agents which query the data for statistical information. These notions are not naturally transferable to unstructured datasets such as text documents.

Generalised differential privacy is an extension of differential privacy which can be applied to arbitrary datasets endowed with a metric. This permits its application to more general datasets, and has found most application in geo-location privacy, involving the differentially private release of users'

We recall the usual definition of differential privacy which says that, for any pair of adjacent datasets $x, x'$ and some output $z$ from a query, a (probabilistic) mechanism $K$ satisfies $\epsilon$-differential privacy if
\[
             K(x)(z) \leq e^\epsilon K(x')(z)
\]
for some non-negative $\epsilon$.~\footnote{We use only the strict $\{\epsilon, 0\}$ version of differential privacy in this paper, and not its relaxation ($\epsilon$, $\delta$)-differential privacy.}

Generalised differential privacy extends this notion to domains endowed with a metric as follows: for any domain of secrets $\mathcal{X}$ endowed with a metric $d_{\mathcal{X}}$ and any elements $x, x' \in \mathcal{X}$, a (probabilistic) mechanism $K$ satisfies $\epsilon d_{\mathcal{X}}$-privacy if
\[
             K(x)(z) \leq e^{\epsilon d_{\mathcal{X}}(x, x')} K(x')(z)
\]

In this paper we show how generalised differential privacy can be applied to author obfuscation. This task requires the private release of documents so as to protect the identity of the author whilst maintaining some semantic properties of the document. We draw on existing notions of authorship from the stylometry and natural language processing literature which incorporate the use of distance measures between authors. 

%------------------------------------------- BACKGROUND --------------------------------------------

\section{Problem Description}

Author obfuscation is the task of obscuring a piece of text in order to hide its authorship whilst preserving its semantic content. Stylometric techniques have identified three types of features used to identify authorship of a document: stylistic, word-based and character-based features. Stylistic features are typically lexical, syntactic or document-level characteristics. For example, average word length, average sentence length and frequency of use of particular words are all features which can be unique for authors.  Word-based methods treat each word in the document as a feature, and represent a document as a \emph{bag of words}, which ignores word ordering but preserves frequency counts of individual words. Finally, character-based features treat individual sequences of characters as features for document representation. These sequences are referred to as \emph{character n-grams}. For example, the character 3-gram representation of the phrase "There it is" would be `The', `her', `ere', `re\_', `e\_it', `it\_', `t\_i', `\_is'~\footnote{Note that we use `\_' to represent spaces.}.

\subsection{Model}

We envisage an author who wishes to release a document which preserves the topicality of the original document whilst masking stylometric features which may reveal their authorship. By 'topicality' we refer to the ability of a document consumer to identify the original topic of the document. In order to provide a privacy guarantee for any adversary (ie over any prior), we will use generalised differential privacy. Our privacy promise is that the output document is almost the same, regardless of whether the input document was $x$ or some 'close' document $x'$, where closeness is defined using an appropriate metric distance. In order to achieve privacy over authors, we need to modify the features in the document which reveal authorship. This can be done firstly by considering documents formatted as bags of words (BOW). Such document are useful for machine learning applications, which typically use BOW formats and ignore word ordering. This formatting also removes word ordering information, which reveals important stylistic information. Secondly, we can remove `stopwords', which are words such as pronouns and prepositions, which do not contribute to the meaning of the document. These have been shown to be highly effective in author attribution, but because they contribute nothing to our utility requirement they can be safely discarded. Finally, we can consider the removal of words which do not significantly contribute to the topicality of the document, so as to reduce the document down to the smallest set of words which guarantee some usefulness. This can be done using a feature classifier to identifier the most significant features for topic classification.

\subsection{Application of Differential Privacy}

We can represent words as real-valued vectors using a word embedding representation such as Word2Vec or GloVe. These representations preserve semantic relationships between words, where the semantic distance is typically measured using either cosine similarity or Euclidean distance. This also allows the entire word embedding vocabulary to also be treated as a synonym set for any word in the vocabulary. A natural metric to then consider for measuring the semantic distance between documents is the Word Mover's Distance. This metric has been designed specifically for use with word embedding vectors, and is based on the well-known Earth Mover's Distance. Briefly, the Word Mover's Distance is the cost of moving all the words from one document to the words in another document. We will formalise this in the next section.

%---------------------------------- PRELIMINARIES --------------------------------
\section{Preliminaries}

In this section we lay out some definitions for use throughout the rest of this paper. 
We will only be interested in discrete sets so we present a simplified formalisation, noting that the definition also applied to continuous sets and distributions.

Let $\mathcal{X}$ and $\mathcal{Z}$ be finite sets and let $\mathbb{P}(\mathcal{Z})$ be the set of probability measures over $\mathcal{Z}$. We define a \emph{mechanism} as a probabilistic function $K: \mathcal{X} \rightarrow \mathbb{P}(\mathcal{Z})$ 

Recall that a metric $d: \mathcal{X} \times \mathcal{X} \rightarrow [0, \infty)$ satisfies (i) $d(x,y) = 0$ iff $x=y$, (ii) $d(x, y) = d(y, x)$ and (iii) $d(x, y) + d(y, z) \ge d(x, z)$ for all $x, y, z \in \mathcal{X}$. 
We denote by $d_2$ the Euclidean metric on $\mathbb{R}^n$. 

We recall the definition of generalised differential privacy:

\begin{definition}{(Generalised Differential Privacy)}
\label{gdp}
Let $\epsilon > 0$. A mechanism $K : \mathcal{X} \rightarrow \mathbb{P}(\mathcal{Z})$ satisfies $d_\mathcal{X}$-privacy, iff $\forall x, x' \in \mathcal{X}$: 
\[
         K(x)(Z) \leq e^{\epsilon d_\mathcal{X}(x, x')} K(x')(Z) \quad \forall Z \subseteq \mathcal{Z}~\footnote{We make the simplifying assumption that all sets in $\mathcal{Z}$ are measurable.}
\]
\end{definition}

We now formalise some notions from the natural language processing literature. 

Let $\mathcal{V}$ be a fixed finite vocabulary of words from all possible documents.  A \emph{bag of words} (BOW) is an unordered, finite-length lists of words from $\mathcal{V}$ with duplicates permitted. A \emph{document vector} is an ordered bag of words.  A \emph{word embedding vector} is a $k$-dimensional real-valued vector representing a word in $\mathcal{V}$, for some fixed positive integer $k$. We assume the existence of a word embedding vector lookup table, denoted $W$, such that $W(w)$ returns the word embedding vector for the word $w \in \mathcal{V}$. 

We denote by $\mathcal{W}$ the universe of word embedding vectors (for all words in $\mathcal{V}$). We denote by $\mathcal{U}(\mathcal{W})$ the set of unordered lists of word vectors, also known as BOWs, and by $\mathcal{O}(\mathcal{W})$ the set of ordered lists of word vectors, also known as document vectors.

We note that we can transform a BOW into a document vector by fixing an (arbitrary) ordering of words.

The Word Mover's Distance can be formally defined as follows:

\begin{definition}{(Word Mover's Distance)}
Let $x, y \in \mathcal{O}(\mathcal{W})$ be document vectors of lengths $a$ and $b$ respectively. We assume the existence of a non-negative, real-valued cost function over $\mathcal{V}$. Let $C \in \mathbb{R}^{a \times b}$ be a cost matrix, where $C_{ij}$ represents the cost of moving word $i$ in $x$ to word $j$ in $y$. Define $T \in \mathbb{R}^{a \times b}$ to be a flow matrix where the entry $T_{ij}$ denotes how much of word $i$ in $x$ moves to word $j$ in $y$. Then the Word Mover's Distance $d_W(x, y)$ is defined as the solution to the linear optimisation problem:
\[
     d_W(x, y) = \min\limits_{T \ge 0} \sum_{i, j} T_{ij} C_{ij} 
\]
    subject to:
\[
   \sum_j T_{ij} = \frac{1}{a} \qquad \forall i \in [1 \ldots a]
\]
and
\[
    \sum_i T_{ij} = \frac{1}{b}  \qquad \forall j \in [1 \ldots b]
\]
\end{definition}

We also define our notion of document privacy under the term \emph{document-indistinguishability}.

\begin{definition}{(Document-Indistinguishability)}\label{DI}
Let $d_W:  \mathcal{U}(\mathcal{W}) \times \mathcal{U}(\mathcal{W}) \rightarrow [0, \infty)$ be the Word Mover's Distance defined on BOW documents. A mechanism $K: \mathcal{U}(\mathcal{W}) \rightarrow \mathbb{P}(\mathcal{U}(\mathcal{W}))$ satisfies $\epsilon$-document-indistinguishability iff for all $x, x' \in \mathcal{U}(\mathcal{W})$ :
\[
          K(x)(Z) \leq e^{\epsilon d_W(x, x')} K(x')(Z)\quad \forall Z \subseteq \mathcal{U}(\mathcal{W})
\]
\end{definition}

\section{Privacy Using the Word Mover's Distance}

We now present some results on the Word Mover's Distance which will allow us to produce a differentially private mechanism for documents.

\subsection{Optimal Solution to the Word Mover's Distance}

Our first result shows that an optimal solution to the Word Mover's Distance problem involves the movement of whole words only, precisely when the source and destination documents are the same length. In order to prove this result, we first introduce some results on doubly stochastic matrices.

\begin{definition}{}
An $n \times n$ matrix whose elements are non-negative and has all rows and columns summing to 1 is called \emph{doubly stochastic}. A doubly stochastic matrix which contains only 1's and 0's is called a \emph{permutation matrix}. 
\end{definition}

\begin{theorem}{(Birkhoff-von Neumann)}
The set of $n \times n$ doubly stochastic matrices forms a convex polytope whose vertices are the $n \times n$ permutation matrices.
\end{theorem}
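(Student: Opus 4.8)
The plan is to prove the two assertions separately: first that the set $\mathcal{D}_n$ of $n \times n$ doubly stochastic matrices is a convex polytope, and second that its vertices (extreme points) are exactly the permutation matrices. The first assertion is the routine part. A matrix $A$ lies in $\mathcal{D}_n$ precisely when it satisfies the linear inequalities $A_{ij} \geq 0$ together with the linear equalities $\sum_j A_{ij} = 1$ and $\sum_i A_{ij} = 1$. Thus $\mathcal{D}_n$ is the intersection of finitely many closed half-spaces and hyperplanes in $\mathbb{R}^{n \times n}$, hence a polyhedron; and it is bounded, since $0 \leq A_{ij} \leq 1$ for every entry. A bounded polyhedron is a polytope, and being the solution set of linear constraints it is convex.

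For the second assertion I would first check that each permutation matrix $P$ is an extreme point. Suppose $P = \lambda A + (1-\lambda) B$ with $A, B \in \mathcal{D}_n$ and $0 < \lambda < 1$. Wherever $P_{ij} = 0$, non-negativity of $A$ and $B$ forces $A_{ij} = B_{ij} = 0$; the remaining entries are then pinned to $1$ by the row-sum constraints, so $A = B = P$. Hence $P$ cannot be written as a nontrivial convex combination and is a vertex.

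The harder direction, and the main obstacle, is showing that every vertex is a permutation matrix; equivalently, that any $A \in \mathcal{D}_n$ with a strictly fractional entry fails to be extreme. Here I would exhibit a cycle among the fractional entries. Starting from an entry with $0 < A_{i_1 j_1} < 1$, the constraint $\sum_j A_{i_1 j} = 1$ forces a second fractional entry in row $i_1$, the column constraint then forces a fractional entry in that column, and so on; since the matrix is finite this walk must eventually revisit a row or column, closing a cycle of even length that alternates between horizontal and vertical steps. Let $E$ be the matrix carrying the values $+1$ and $-1$ alternately around this cycle and $0$ elsewhere. By construction each row and each column of $E$ contains equally many $+1$s and $-1$s, so $E$ has zero row and column sums; consequently $A + \delta E$ and $A - \delta E$ both satisfy all equality constraints, and for $\delta$ small enough they remain non-negative, hence lie in $\mathcal{D}_n$. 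Then $A = \tfrac{1}{2}(A + \delta E) + \tfrac{1}{2}(A - \delta E)$ expresses $A$ as a nontrivial convex combination of two distinct points of $\mathcal{D}_n$, so $A$ is not a vertex. The delicate point is arguing that the fractional support always contains such a cycle; I expect the cleanest route is the counting argument above, observing that a row or column cannot contain exactly one fractional entry, since otherwise its entries could not sum to the integer $1$. This guarantees that the walk through fractional entries never terminates and must therefore close into a cycle.
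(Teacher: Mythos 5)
Your proof is correct, but note that the paper does not actually prove this statement: it states the Birkhoff--von Neumann theorem as a known classical result and only uses its consequences (in particular, that every doubly stochastic matrix is a convex combination of permutation matrices, and that an optimum of a linear objective over the polytope is attained at a permutation matrix). What you have supplied is the standard self-contained argument, and it holds up. The polytope claim is routine, as you say; the proof that permutation matrices are extreme is right, since $P_{ij}=0$ together with non-negativity pins $A_{ij}=B_{ij}=0$ and the row sums then force the remaining entries to $1$. The substantive direction --- that a matrix with a fractional entry is not extreme --- is handled by the correct mechanism: the parity observation that a row or column summing to the integer $1$ cannot contain exactly one entry in $(0,1)$ guarantees that, in the bipartite graph of rows and columns whose edges are the fractional positions, every vertex meeting an edge has degree at least $2$, so a cycle exists; bipartiteness makes it even; the alternating $\pm 1$ perturbation $E$ then has zero row and column sums, and $A \pm \delta E$ stays in the polytope for small $\delta$ because the perturbed entries all lie strictly between $0$ and $1$. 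The only detail worth making explicit in a polished write-up is that when the walk first revisits a row or column you extract the cycle from the first repeated vertex onward, and that each vertex of the cycle is incident to exactly two cycle edges of opposite sign, which is what gives $E$ its zero marginals. If you also wanted the form of the theorem the paper actually uses later (every doubly stochastic matrix is a convex combination of permutation matrices), you would finish by invoking the fact that a polytope is the convex hull of its vertices, or run the cycle argument as an induction on the number of fractional entries.
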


The Birkhoff-von Neumann theorem says that the set of doubly stochastic matrices is a closed, bounded convex set, and every doubly stochastic matrix can be written as a convex combination of the permutation matrices. We can use this theorem to prove the following result.

\begin{theorem}{}
Let $C$ be an $n \times n$ cost matrix. Then the optimisation problem $\text{minimise} \sum\limits_{i, j} T_{ij} C_{ij}$ where $T$ is an $n \times n$ doubly stochastic matrix always has an $n \times n$ permutation matrix as an optimal solution.
\label{ds_opt}
\end{theorem}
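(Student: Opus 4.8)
The plan is to exploit the fact that the objective $\sum_{i,j} T_{ij} C_{ij}$ is a \emph{linear} functional of the matrix $T$ --- writing it as the Frobenius inner product $\langle T, C\rangle = \sum_{i,j} T_{ij}C_{ij}$ --- together with the structure of the feasible region supplied by the Birkhoff--von Neumann theorem. Since there are only finitely many $n\times n$ permutation matrices, I would begin by choosing among them one, call it $P^*$, that minimises $\langle P, C\rangle$ over all permutation matrices $P$; such a minimiser exists simply because the set of permutation matrices is finite. The goal is then to promote this $P^*$ from being optimal among the vertices to being optimal over the entire polytope of doubly stochastic matrices.

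To that end I would take an arbitrary doubly stochastic matrix $T$. By Birkhoff--von Neumann, $T$ decomposes as a convex combination of permutation matrices, $T = \sum_k \lambda_k P_k$ with $\lambda_k \ge 0$ and $\sum_k \lambda_k = 1$. Using linearity of the inner product,
\[
\langle T, C\rangle = \Big\langle \sum_k \lambda_k P_k,\, C \Big\rangle = \sum_k \lambda_k \langle P_k, C\rangle .
\]
Because each coefficient $\lambda_k$ is non-negative and they sum to one, the right-hand side is a convex combination of the numbers $\langle P_k, C\rangle$, and hence is bounded below by their minimum; since $P^*$ was chosen to attain that minimum over \emph{all} permutation matrices, we obtain $\langle T, C\rangle \ge \langle P^*, C\rangle$. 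As $T$ was arbitrary, $P^*$ minimises the objective over all doubly stochastic matrices, which is exactly the claim.

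The argument is short, and the only genuinely load-bearing step is the decomposition guaranteed by Birkhoff--von Neumann; everything else is the elementary observation that a linear functional attains its extreme value over a polytope at a vertex. The one point I would handle carefully is that this route avoids having to assume a priori that an optimum exists over the (infinite) set of doubly stochastic matrices: by starting from the finite collection of permutation matrices I obtain the minimiser $P^*$ for free, and the convex-combination inequality then upgrades its optimality to the full polytope. I would therefore expect no substantive obstacle beyond stating the inner-product reformulation cleanly and invoking the preceding theorem in the right direction.
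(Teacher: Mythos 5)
Your proof is correct, and it rests on the same load-bearing fact as the paper's --- the Birkhoff--von Neumann decomposition of a doubly stochastic matrix into a convex combination of permutation matrices, followed by the observation that a convex combination of real numbers is bounded below by their minimum. The difference is in the logical packaging. The paper argues by contradiction: it posits an optimal doubly stochastic $T^*$ that is not a permutation matrix and assumes no permutation matrix is optimal, then expands $T^*$ via the decomposition to derive $\sum_{i,j} T^*_{ij}C_{ij} > \sum_{i,j} T^*_{ij}C_{ij}$. This forces the paper to first establish that an optimum over the whole polytope exists, which it attributes (somewhat loosely) to Birkhoff--von Neumann ``since the set of solutions is closed and bounded'' --- really an appeal to Weierstrass, not to the decomposition theorem itself. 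Your direct argument sidesteps that issue entirely: by selecting the minimising permutation matrix $P^*$ from a finite set you get existence for free, and the convex-combination inequality then certifies $P^*$ as a global minimiser over the polytope. So the two proofs buy the same theorem with the same key lemma, but yours is the cleaner route --- it is constructive in the sense of exhibiting the optimal vertex, and it does not lean on an existence claim that the paper justifies only in passing.
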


We are now ready to present a result on the Word Mover's Distance, namely that documents of equal length always have an optimal solution which does not involve partial movements of words. We note that the order of words in a document does not affect the calculation of the Word Mover's Distance, hence the following result on document \emph{vectors} also holds for BOW documents.

%-------------------------------- MAIN THEOREM -------------------------------------

\begin{theorem}{}
Let $d_1, d_2 \in \mathcal{O}(\mathcal{W})$ be $n$-dimensional document vectors. Then the Word Mover's Distance $d_W(d_1, d_2)$ has an optimal solution involving only the movement of whole words. 
\label{wmd-thm}
\end{theorem}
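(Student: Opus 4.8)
The plan is to reduce the Word Mover's Distance linear program for equal-length documents directly to the doubly-stochastic optimisation of Theorem~\ref{ds_opt}. Since $d_1$ and $d_2$ are both $n$-dimensional, we have $a = b = n$, so the flow matrix $T \in \mathbb{R}^{n \times n}$ is constrained by $\sum_j T_{ij} = 1/n$ for every $i$ and $\sum_i T_{ij} = 1/n$ for every $j$, with $T \ge 0$. The first move I would make is the rescaling $T' := nT$. Under this substitution every row sum and every column sum of $T'$ equals $1$ and $T' \ge 0$, so $T'$ ranges exactly over the $n \times n$ doubly stochastic matrices as $T$ ranges over the feasible region of the Word Mover's Distance problem, and the correspondence $T \leftrightarrow T'$ is a linear bijection between the two feasible polytopes.

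Next I would track the objective under this rescaling. We have $\sum_{i,j} T_{ij} C_{ij} = \tfrac{1}{n} \sum_{i,j} T'_{ij} C_{ij}$, so minimising the original objective over feasible $T$ is, up to the positive constant factor $1/n$, the same problem as minimising $\sum_{i,j} T'_{ij} C_{ij}$ over doubly stochastic $T'$. Scaling an objective by a positive constant leaves the set of minimisers unchanged, so the two problems have identical optimal solution sets, with optimal values differing only by the factor $1/n$.

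I would then invoke Theorem~\ref{ds_opt} directly: the doubly-stochastic problem admits a permutation matrix $P$ as an optimal solution. Pulling this back through the bijection yields an optimal flow $T = \tfrac{1}{n} P$ for the original problem. The final step is to read off the interpretation: because $P$ is a permutation matrix, each row $i$ has a single nonzero entry, so $T$ sends the entire mass $1/n$ associated with word $i$ in $d_1$ to exactly one word $j$ in $d_2$, and symmetrically each destination word receives its full mass from a single source. This is precisely the movement of whole words, with no fractional splitting.

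I expect the only real subtlety here to be bookkeeping rather than genuine difficulty: one must verify that the rescaling is an honest bijection of feasible regions, so that optimality transfers in both directions, and that the constant $1/n$ factor does not disturb the argmin. The remark preceding the statement — that the order of words does not affect $d_W$ — then lets us extend the conclusion from document vectors to BOW documents with no additional argument.
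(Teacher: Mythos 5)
Your proposal is correct and follows essentially the same route as the paper: rescale the flow matrix by $n$ to obtain a doubly stochastic matrix, invoke Theorem~\ref{ds_opt} to get a permutation matrix as an optimal solution, and pull it back to read off whole-word flows. Your version is in fact slightly more careful than the paper's, since you explicitly verify that the rescaling is a bijection of feasible regions and that the constant factor $1/n$ in the objective does not disturb the set of minimisers, whereas the paper only asserts that the problem is ``in essence unchanged.''
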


\begin{proof}
Since $d_1$ and $d_2$ are both of length $n$, the flow matrix $T$ must be $n \times n$ so we can rewrite the constraints as
\[
             \sum\limits_i T_{ij} = \frac{1}{n} \quad \forall j \in [1 \ldots n] \quad \text{and} \quad \sum\limits_j T_{ij} = \frac{1}{n} \quad \forall i \in [1 \ldots n]
\] 
Notice that the optimisation problem is in essence unchanged if we multiply $T$ by a constant, so we could also write
\[
             \sum\limits_i T_{ij} = 1 \quad \forall j \in [1 \ldots n] \quad \text{and} \quad \sum\limits_j T_{ij} = 1 \quad \forall i \in [1 \ldots n]
\] 

Thus we have a $T$ that is doubly stochastic. From Theorem~\ref{ds_opt} the optimal solution includes a permutation matrix, that is, a matrix in which there is exactly one 1 in each row and column and the remaining elements are 0. But this corresponds to a flow where each word in $d_1$ moves entirely to a whole word in $d_2$. This completes the proof.
\qed
\end{proof}

%-------------------------- EXAMPLE 2 -------------------------------

\subsubsection{Example} 
We will use a simple example to demonstrate how the flow matrix can be simplified when both documents are of the same length. Consider the documents $\vec{d} = $ `Obama speaks Illinois' and $\vec{d'} = $ `President greets press'. The relative mass of each word in each document is $\frac{1}{3}$ since both documents are the same length; this is depicted by the vectors $\vec{d_p}$ and $\vec{d'_p}$ in Figure~\ref{wmd2}. We imagine a cost matrix such that the flow matrix $T$ given in Figure~\ref{wmd-exa} is optimal. Now, consider the scenario in Figure~\ref{wmd-exb} where the relative mass of each word is $1$. Since we have simply multiplied the relative weights of all words by 3 without changing the cost matrix, the optimal solution matrix $T'$ will correspond exactly to $3T$, and becomes doubly stochastic. Although the computed distance will also be 3 times the original distance, the \emph{relative} flow between words in the 2 documents is preserved. In particular, whole word flows in Figure~\ref{wmd-exa} correspond with whole word flows in Figure~\ref{wmd-exb}. Thus the result of Theorem~\ref{wmd-thm} stands in both cases.

%----------------------------- FIGURE 2 -------------------------------

\begin{figure}%
\small
\centering
\begin{minipage}{0.4\linewidth}%
  \begin{tikzpicture}[>=stealth,
        every node/.style={block},
        block/.style={minimum height=2em,minimum width=3em,outer sep=0pt,draw,rectangle,node distance=0pt}]
        \node (A) [] {$\frac{1}{3}$};
        \node (A1) [right=of A] {$\frac{1}{3}$};
        \node (A2) [right=of A1] {$\frac{1}{3}$};    
        
        \node [draw=none, left=0.3cm of A] {$\vec{d_p} =$};    
        
        \node (C) [below=0.6cm of A]  {$\frac{1}{3}$};
        \node (C1) [right=of C] {$\frac{1}{3}$};
        \node (C2) [right=of C1] {$\frac{1}{3}$};  
        
        \node [draw=none, left=0.3cm of C] {$\vec{d'_p} =$}; 
        
        \draw [->, dashed] (A) edge node [draw=none] {} (C1);
        \draw [->, dashed] (A1) edge node [draw=none] {} (C);
        \draw [->, dashed] (A2) edge node [draw=none] {} (C2);  
\end{tikzpicture}%
\vskip 0.5cm
\begin{tikzpicture}[>=stealth,
        every node/.style={block},
        block/.style={minimum height=2em,minimum width=2em,outer sep=0pt,rectangle,node distance=0pt}]
        \node (T11) [] {$0$};
        \node (T12) [right=of T11] {$\frac{1}{3}$};
        \node (T13) [right=of T12] {$0$};  
        
        \node (T21) [below=of T11]  {$\frac{1}{3}$};
        \node (T22) [right=of T21] {$0$};
        \node (T23) [right=of T22] {$0$};  
        
        \node (T31) [below=of T21]  {$0$};
        \node (T32) [right=of T31] {$0$};
        \node (T33) [right=of T32] {$\frac{1}{3}$};  
                
        \node [draw=none, left=0.6cm of T21] {\textbf{T} =}; 
        
        \draw [decorate, decoration={brace,raise=3pt}] (T13.north east) -- (T33.south east) node (a) [midway,draw=none] {};
        \draw [decorate, decoration={brace,mirror,raise=3pt}] (T11.north west) -- (T31.south west) node (b) [midway,draw=none] {};                      
\end{tikzpicture}%
\caption{\small Sample vectors and flow matrix $T$ for standard Word Mover's Distance formulation.}%
\label{wmd-exa}%
\end{minipage}%
\qquad%
\begin{minipage}{0.4\linewidth}%
\begin{tikzpicture}[>=stealth,
        every node/.style={block},
        block/.style={minimum height=2em,minimum width=3em,outer sep=0pt,draw,rectangle,node distance=0pt}]
        \node (A) [] {$1$};
        \node (A1) [right=of A] {$1$};
        \node (A2) [right=of A1] {$1$};    
        
        \node [draw=none, left=0.3cm of A] {$\vec{d_p} =$};    
        
        \node (C) [below=0.6cm of A]  {$1$};
        \node (C1) [right=of C] {$1$};
        \node (C2) [right=of C1] {$1$};  
        
        \node [draw=none, left=0.3cm of C] {$\vec{d'_p} =$}; 
        
        \draw [->, dashed] (A) edge node [draw=none] {} (C1);
        \draw [->, dashed] (A1) edge node [draw=none] {} (C);
        \draw [->, dashed] (A2) edge node [draw=none] {} (C2);  
\end{tikzpicture}%
\vskip 0.5cm
\begin{tikzpicture}[>=stealth,
        every node/.style={block},
        block/.style={minimum height=2em,minimum width=2em,outer sep=0pt,rectangle,node distance=0pt}]
        \node (T11) [] {$0$};
        \node (T12) [right=of T11] {$1$};
        \node (T13) [right=of T12] {$0$};  
        
        \node (T21) [below=of T11]  {$1$};
        \node (T22) [right=of T21] {$0$};
        \node (T23) [right=of T22] {$0$};  
        
        \node (T31) [below=of T21]  {$0$};
        \node (T32) [right=of T31] {$0$};
        \node (T33) [right=of T32] {$1$};  
                
        \node [draw=none, left=0.6cm of T21] {\textbf{T'} =}; 
        
        \draw [decorate, decoration={brace,raise=3pt}] (T13.north east) -- (T33.south east) node (a) [midway,draw=none] {};
        \draw [decorate, decoration={brace,mirror,raise=3pt}] (T11.north west) -- (T31.south west) node (b) [midway,draw=none] {};                      
\end{tikzpicture}%
\caption{\small Modified vectors using multiplicative factor of 3 to produce doubly stochastic flow matrix $T'$.}%
\label{wmd-exb}%
\end{minipage}%
\caption{\small Example of modifying the Word Mover's Distance problem to generate a doubly stochastic flow matrix. Multiplying the relative word weights by a constant corresponds to multiplication of the flow matrix by the same constant, however the relative flows between words remains unchanged.}
\label{wmd2}
\end{figure}

%---------------------------- INTRO TO MAIN THEOREM 2 -----------------------------

\subsection{Extension to Document-Indistinguishability}

We are now ready to present our main theorem, which provides a connection between generalised differential privacy for vectors and document indistinguishability.

\begin{theorem}{}
Let $K: \mathbb{R}^k \rightarrow \mathbb{P}(\mathbb{R}^k)$ be a mechanism operating on real-valued $k$-dimensional vectors which satisfies $\epsilon d_2$-privacy. Then $K$ can be extended to a mechanism $K^*: \mathcal{U}(\mathcal{W}) \rightarrow \mathbb{P}(\mathcal{U}(\mathcal{W}))$ operating on documents which satisfies $\epsilon$-document-indistinguishability.
\label{d2-thm}
\end{theorem}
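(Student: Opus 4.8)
The plan is to build $K^*$ by applying the given vector mechanism $K$ independently to each word of a document. Concretely, for a BOW $d = \{w_1, \dots, w_n\}$ with each $w_i \in \mathcal{W} \subseteq \mathbb{R}^k$, I would sample $z_i \sim K(w_i)$ independently for every $i$ and return the BOW $\{z_1, \dots, z_n\}$. Because $K$ is applied identically and independently to every coordinate, the induced distribution on output BOWs is invariant under relabelling of the input words, so $K^*$ is well defined on $\mathcal{U}(\mathcal{W})$ and genuinely extends $K$.

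To reduce the unordered problem to an ordered one, I would record a \emph{pushforward} observation: let $q : \mathcal{O}(\mathcal{W}) \to \mathcal{U}(\mathcal{W})$ be the map that forgets word order, and let $\hat K(d)$ denote the ordered product measure $\bigotimes_i K(w_i)$. Then $K^*(d) = q_* \hat K(d)$, so $K^*(d)(Z) = \hat K(d)(q^{-1}(Z))$ for every measurable $Z$. Consequently it suffices to establish the multiplicative bound at the level of the ordered product measures $\hat K(d)$ and $\hat K(d')$, since applying that bound to the set $q^{-1}(Z)$ immediately yields the BOW inequality of Definition~\ref{DI}.

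For the ordered bound I would invoke Theorem~\ref{wmd-thm}. Assuming $d$ and $d'$ have the same length $n$ and that the transport cost is the Euclidean one, $C_{ij} = d_2(w_i, w'_j)$, that theorem tells us the optimal Word Mover's flow is a scaling of a permutation $\pi$, so $d_W(d, d') = \frac{1}{n} \sum_i d_2(w_i, w'_{\pi(i)})$. Since BOWs carry no intrinsic order, I am free to choose the orderings of $d$ and $d'$ so that this optimal permutation is the identity, i.e. $w_i$ is paired with $w'_i$. Applying the $\epsilon d_2$-privacy of $K$ coordinatewise and multiplying the per-word likelihood ratios then gives $\hat K(d)(\cdot) \le e^{\epsilon \sum_i d_2(w_i, w'_i)} \hat K(d')(\cdot)$, and the pushforward step of the previous paragraph converts this into the claimed bound for $K^*$.

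The main obstacle is reconciling the exponent produced by coordinatewise composition, namely the \emph{sum} $\sum_i d_2(w_i, w'_i)$ of per-word displacements, with the \emph{normalised} Word Mover's Distance, which is the average $\frac{1}{n} \sum_i d_2(w_i, w'_i)$; these differ by the factor $n$, so I would need either to adopt the unnormalised transport constraints used in the proof of Theorem~\ref{wmd-thm} or to absorb the length factor into $\epsilon$. A second, more serious difficulty is the restriction to equal lengths: when $|d| \neq |d'|$ the product construction produces measures on spaces of different dimension with disjoint support, and the optimal Word Mover's flow is a genuinely fractional transport plan rather than a permutation, so a single output vector may receive mass from several input words and the clean coordinatewise composition breaks down. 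Handling this unequal-length case is where I expect the real work to lie.
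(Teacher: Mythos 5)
Your construction is essentially the paper's own proof: its Lemma~\ref{lemma-extend} extends $K$ word\-wise to ordered vectors and sums over output permutations (your pushforward $q_*\hat K$ in different notation), and the main argument then fixes the ordering of the second document to minimise $\sum_i d_2(w_i,w'_i)$, composes the per-word $\epsilon d_2$ bounds, and invokes Theorem~\ref{wmd-thm} to identify that minimal sum with the Word Mover's Distance. The two obstacles you flag are genuine, and the paper resolves them exactly as you anticipate: it implicitly works with the unnormalised transport cost (consistent with the rescaling of $T$ by $n$ in the proof of Theorem~\ref{wmd-thm}, so the stated $d_W(\vec d,\vec d')=\sum_k d_2(w_k,w'_k)$ silently drops the $\tfrac{1}{n}$ from the formal definition), and it simply restricts to documents of equal length (``necessarily, due to the operation of the mechanism $K^*$''), relying on the fixed-length preprocessing of Section~5 to make that restriction harmless --- note that for unequal lengths the claim would in fact fail outright, since $K^*(d)$ and $K^*(d')$ would have disjoint supports.
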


\subsubsection{Proof Sketch}

Firstly, we can think of $K$ as a mechanism operating on document (ordered) vectors, and show that it can be extended to a mechanism operating on BOW documents (unordered vectors) by considering the appropriate permutations. It turns out that we can fix any ordering of words in the source document and just consider permutations of the output document. This means the privacy guarantee for input documents $x, x'$ is determined by any ordering we choose. We can therefore choose an ordering which minimises the transportation cost between corresponding words in the documents, which, from Theorem~\ref{wmd-thm}, corresponds to the WMD. 

\section{Privacy Mechanism}

We are now ready to present a privacy mechanism for document-indistinguishability. We have seen that to find a mechanism satisfying $\epsilon$-document-indistinguishability, it suffices to find a mechanism operating on vectors which satisfies $\epsilon d_2$-privacy. Previous work~\citep{andres2013geo} has shown how this can be done in 2-dimensions via the planar Laplacian. We now present an extension of this result to n-dimensions. 

\subsection{n-Dimensional Laplace Mechanism}

Given sets $\mathcal{X}, \mathcal{Z}$ of n-dimensional vectors, we would like a mechanism $K$ with pdf $D$ satisfying
\[
   D(x)(z) \propto e^{- \epsilon d_2(x, z)} \quad \text{for } x \in \mathcal{X}, z \in \mathcal{Z}
\] 

Such a mechanism is called a Laplace mechanism and satisfies $\epsilon d_2$-privacy~\citep{chatzikokolakis2013broadening}. We require a method of selecting a vector according to this distribution.

Noting that $D$ is spherically symmetric, and using translation invariance, we can consider the distribution $D(0)(z)$ and translate this by $x$ to get the distribution $D(x)(z)$. For notational convenience we write $D(0)(z)$ as $D_0(z)$.

Using $z = (z_1, z_2, \ldots, z_n)$ then $D_0(z)$ can be rewritten as 
\[
    D_0(z) = ce^{- \epsilon \sqrt{z^2_1 + z^2_2 + \ldots + z^2_n}}
\]

Calculating the constant $c$ can be done by a mapping $(z_1, z_2, \dots, z_n) \mapsto (r, \theta_1, \theta_2, \dots, \theta_{n-1})$ to spherical co-ordinates. This yields the following integral
\begin{align*}
  \int\limits_{0}^{\infty} c_1r^{n-1}e^{-\epsilon r} dr \int\limits_0^\pi c_2\sin^{n-2}\theta_1 d\theta_1 \int\limits_0^\pi c_3\sin^{n-3}\theta_2 d\theta_2 \dots \int\limits_0^{\pi} c_{n-1}\sin\theta_{n-2} d\theta_{n-2} \int\limits_0^{2\pi} c_n d\theta_{n-1}
\end{align*}
where $\prod\limits_{k=1}^n c_k = c$.

This is a product of independent distributions, and in particular, the constant $c_1$ in the first integral can be shown to evaluate to $\frac{\epsilon ^n}{(n-1)!}$ yielding
\[
 \int\limits_{0}^{\infty} \frac{\epsilon ^n}{(n-1)!} r^{n-1} e^{-\epsilon r} dr = \int\limits_{0}^{\infty} \frac{\epsilon ^n}{\Gamma(n)} r^{n-1} e^{-\epsilon r} dr
\]
which we recognise as the PDF of the Gamma distribution
\[
     f(x, k; \theta) = \frac{1}{\Gamma(k) \theta^k} x^{k-1} e^{-\frac{x}{\theta}}
\]
where $x \mapsto r$, $k \mapsto n$ and $\theta \mapsto \frac{1}{\epsilon}$.

The remaining product of integrals is equivalent to selecting an n-dimensional vector uniformly over the unit n-sphere. Therefore, the selection of a random n-dimensional Laplace vector can be achieved by selecting a random vector uniformly over the surface of an n-sphere and applying a scaling factor drawn from the gamma distribution. We formalise this as follows
\begin{theorem}{}
Let $U_{S_1}$ be the uniform distribution on the $n$-dimensional sphere of radius 1, and denote by $\text{Gamma}(k, \theta)$ the Gamma distribution with shape $k$ and scale $\theta$. Let $K: \mathbb{R}^n \rightarrow \mathbb{P}(\mathbb{R}^n)$ be a mechanism operating on real-valued $n$-dimensional vectors which outputs $z$ with distribution $x + UR$, where $U \sim U_{S_1}$ and $R \sim \text{Gamma}(n, \frac{1}{\epsilon})$. Then the mechanism $K$ is a Laplace mechanism satisfying $\epsilon d_2$-privacy.
\label{e2-thm}
\end{theorem}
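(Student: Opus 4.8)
The plan is to reduce the entire claim to a density computation: since the preceding discussion already cites the fact that any mechanism whose output density is proportional to $e^{-\epsilon d_2(x,z)}$ satisfies $\epsilon d_2$-privacy, it suffices to show that the sampling procedure $z = x + UR$ actually realises this density. In other words, I would prove that the push-forward of the independent pair $(U,R)$ under the map $(u,r) \mapsto x + ru$ has density $D(x)(z) \propto e^{-\epsilon d_2(x,z)}$, and then the privacy guarantee follows immediately from the cited property of the Laplace mechanism.

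First I would invoke translation invariance to set $x = 0$. Because $D(x)(z) = D_0(z-x)$ and the map $z = x + UR$ merely shifts the noise $UR$ by the fixed vector $x$, it is enough to verify that $UR$ has density $D_0(z) = c\,e^{-\epsilon\|z\|}$. This is exactly the spherically symmetric target identified earlier as $D_0$.

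Second, and this is the technical heart, I would pass to polar coordinates $z = r u$ with $r = \|z\| \ge 0$ and $u \in S^{n-1}$ on the unit sphere. The Lebesgue volume element factors as
\[
   dz = r^{n-1}\, dr\, d\sigma(u),
\]
where $d\sigma$ is the surface measure on the sphere. Hence a Lebesgue density $c\,e^{-\epsilon r}$ corresponds, in the coordinates $(r,u)$, to the density $c\,r^{n-1}e^{-\epsilon r}$ with respect to $dr\,d\sigma(u)$. This is precisely the factorisation already exhibited in the spherical-coordinate integral: the radial factor $r^{n-1}e^{-\epsilon r}$, once normalised by $\epsilon^n/\Gamma(n)$, is the density of $\mathrm{Gamma}(n, 1/\epsilon)$, while the angular factor is constant in $u$ and therefore proportional to the uniform law $U_{S_1}$. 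Because the joint density factorises as a function of $r$ alone times a function of $u$ alone, $R$ and $U$ are independent with exactly these marginals; sampling them independently and forming $UR$ thus reconstructs $D_0$, and translating by $x$ recovers $D(x)(z)\propto e^{-\epsilon d_2(x,z)}$. Applying the cited Laplace-mechanism property then finishes the proof.

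The main obstacle is the Jacobian bookkeeping in this change of variables: one must get the factor $r^{n-1}$ correct so that the radial marginal matches shape parameter exactly $n$ (and not $n-1$ or $n+1$), and confirm that the radial and angular normalising constants multiply back to the single constant $c$. Everything else is routine assembly of facts already established in the discussion preceding the statement.
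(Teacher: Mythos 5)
Your proposal is correct and follows essentially the same route as the paper: reduce to the density $D_0(z)\propto e^{-\epsilon\|z\|}$ by translation invariance, change to spherical coordinates so the density factorises into a radial part (Gamma$(n,\tfrac{1}{\epsilon})$ after normalising by $\epsilon^n/\Gamma(n)$) and a uniform angular part, and invoke the cited fact that such a density satisfies $\epsilon d_2$-privacy. The only cosmetic difference is that you work with the abstract surface measure $dz = r^{n-1}\,dr\,d\sigma(u)$ where the paper writes out the explicit angular coordinates and their $\sin^k\theta$ Jacobian factors.
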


Note that choosing n=2 results in the planar Laplacian described in \cite{andres2013geo}.

Several methods have been proposed for the generation of random variables from the Gamma distribution~\citep{kroese2013handbook} as well as the uniform selection of points on the unit n-sphere~\citep{marsaglia1972choosing}. We will not present methods for the former, as there are already implementations in libraries such as Scipy (for Python) for selecting from the Gamma distribution. However, there is a nice method for selecting a random vector from the surface of the unit n-sphere which has been described previously in the literature~\citep{marsaglia1972choosing}. The method is to select $n$ random variables from the Gaussian distribution over $[0, 1]$ into an $n$-dimensional vector $v$ and output the normalised vector $\frac{v}{|v|}$. This method allows a random unit $n$-vector to be drawn without requiring a transformation from polar co-ordinates.

\subsection{Mechanism for Document Indistinguishability}

We now present a mechanism satisfying $\epsilon$-indistinguishability. We assume that input documents have first been converted into a bag of words with stopwords discarded. We then use the method for generating fixed-length documents described in \citep{weggenmann2018syntf}. That is, the bag of words document can be used to generate a distribution over words using the frequency count of each word in the document. A fixed-length document can be generating by selecting $n$ words from the document according to the distribution. These preprocessing steps d

This is shown in Algorithm~\ref{obf2}.

\begin{algorithm}
\caption{\small Obfuscation Mechanism}\label{obf2}
\small
\begin{algorithmic}
\REQUIRE{ epsilon $\epsilon$, word embeddings $W$, documents $d$}
\FOR{doc in d} \STATE{
    words = list words in doc\;
    \FOR{w in words} \STATE{
          x = lookup vector for w in $W$\; \\
          r = select scale according to Gamma(dim(x), $\frac{1}{\epsilon}$)\; \\
          u = select unit vector uniformly on dim(x)-sphere\; \\
          z = x + ru\; \\
          z' = lookup closest word to z in $W$\; \\
          add z' to noisy\_doc }
    \ENDFOR \\
    add noisy\_doc to obfuscated dataset
}

\ENDFOR \\
\RETURN {obfuscated dataset}
\end{algorithmic}
\end{algorithm}

\begin{theorem}{}
The mechanism presented in Algorithm~\ref{obf2} satisfies $\epsilon$-document-indistinguishability.
\end{theorem}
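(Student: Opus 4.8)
The plan is to read Algorithm~\ref{obf2} as a composition of operations, each discharged by a result already established, so that the theorem becomes an instantiation rather than a fresh argument. The inner loop applies, independently to every word vector of the (preprocessed) input, the $n$-dimensional Laplace mechanism of Theorem~\ref{e2-thm}, followed by a deterministic nearest-word lookup; the surrounding structure merely collects these noisy words into an unordered bag. Since Theorem~\ref{d2-thm} already lifts any vector mechanism satisfying $\epsilon d_2$-privacy to a BOW mechanism satisfying $\epsilon$-document-indistinguishability, it suffices to show (a) the per-word core is exactly an $\epsilon d_2$-private Laplace mechanism, (b) the nearest-word lookup preserves that guarantee, and (c) the assembled mechanism is the $K^*$ of Theorem~\ref{d2-thm}.

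First I would isolate the per-word computation $z = x + ru$, with $u$ drawn uniformly on the unit sphere and $r \sim \text{Gamma}(n, \tfrac{1}{\epsilon})$. This is verbatim the sampling rule of Theorem~\ref{e2-thm}, so the map $x \mapsto z$ satisfies $\epsilon d_2$-privacy on $\mathbb{R}^k$. The subsequent assignment of $z'$ as the vector in $\mathcal{W}$ closest to $z$ is a deterministic, input-independent function of the mechanism's output. I would therefore appeal to the post-processing invariance of generalised differential privacy: composing a $d_\mathcal{X}$-private mechanism with any fixed map on its output domain preserves $d_\mathcal{X}$-privacy. Hence the composite per-word mechanism $x \mapsto z'$, now valued in $\mathcal{W}$, still satisfies $\epsilon d_2$-privacy.

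It then remains to lift from single word vectors to bags. The preprocessing fixes a common length $n$, so any two inputs compared in Definition~\ref{DI} are equal-length bags; this is exactly the hypothesis under which Theorem~\ref{wmd-thm} (via Theorem~\ref{ds_opt}) forces a whole-word, permutation-valued optimal flow, making $d_W$ a minimum-cost matching over word pairs. Feeding the $\epsilon d_2$-private per-word mechanism into Theorem~\ref{d2-thm} yields an extended mechanism $K^*: \mathcal{U}(\mathcal{W}) \rightarrow \mathbb{P}(\mathcal{U}(\mathcal{W}))$ satisfying $\epsilon$-document-indistinguishability. Since $K^*$ --- apply the noisy per-word map to each word, then discard the ordering --- is precisely what Algorithm~\ref{obf2} computes, the claim follows.

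The step I expect to be the main obstacle is reconciling the \emph{independent} per-word noise with the document-level bound of Theorem~\ref{d2-thm}. A naive product bound over the $n$ words gives an exponential in the \emph{summed} pairwise distances along a fixed pairing, and it is not a priori obvious that this collapses to the single factor $e^{\epsilon d_W(x,x')}$. The structural reason it does is the freedom to fix an arbitrary ordering of the source and optimise over permutations of the output, which by Theorems~\ref{wmd-thm} and~\ref{ds_opt} realises the minimum-cost matching underlying $d_W$; this is already the content of Theorem~\ref{d2-thm}, so I would instantiate that matching argument rather than re-derive it. Verifying cleanly that the assemble-and-forget-order step of Algorithm~\ref{obf2} coincides with $K^*$, and that no privacy is lost in the continuous-to-discrete nearest-word passage, is the only genuinely delicate bookkeeping; the remainder is direct application of prior results.
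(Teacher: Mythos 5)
Your proposal is correct and follows essentially the same route as the paper's proof: identify the inner loop as the Laplace mechanism of Theorem~\ref{e2-thm}, treat the nearest-word lookup as post-processing that preserves $\epsilon d_2$-privacy, and invoke Theorem~\ref{d2-thm} to lift the per-word guarantee to $\epsilon$-document-indistinguishability. Your version is somewhat more careful than the paper's (which is only a few sentences), in particular in flagging that the collapse of the per-word product bound to $e^{\epsilon d_W(x,x')}$ is already discharged inside Theorem~\ref{d2-thm} and in noting the role of the fixed-length preprocessing, but these are elaborations of the same argument rather than a different one.
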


\begin{proof}
The inner loop contains the Laplace mechanism as described in Theorem~\ref{e2-thm}, which satisfies $\epsilon d_2$-privacy. The 'closest word' step represents a post-processing truncation of the vector $z$, which does not change the $\epsilon d_2$-privacy guarantee of the inner loop. The outer loop applies mechanism in the inner loop to every word in the document. By Theorem~\ref{d2-thm} this outer loop satisfies $\epsilon$-document-indistinguishability.
\qed
\end{proof}

\appendix

\section{Proofs Omitted from Section 4}

\begin{theorem}{}
Let $C$ be an $n \times n$ cost matrix. Then the optimisation problem $\text{minimise} \sum\limits_{i, j} T_{ij} C_{ij}$ where $T$ is an $n \times n$ doubly stochastic matrix always has an $n \times n$ permutation matrix as an optimal solution.
\end{theorem}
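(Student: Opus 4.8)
The plan is to exploit the fact that the objective $f(T) = \sum_{i,j} T_{ij} C_{ij}$ is a \emph{linear} functional of $T$, while the feasible region — the set of $n \times n$ doubly stochastic matrices — is precisely the convex polytope characterised by the Birkhoff-von Neumann theorem, whose vertices are the permutation matrices. The guiding principle is the standard fact that a linear (hence continuous) function on a compact convex polytope attains its minimum at a vertex; I would then simply read off from Birkhoff-von Neumann that the vertices are the permutation matrices.

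First I would establish existence of an optimal solution. Writing $\mathcal{D}_n$ for the set of $n \times n$ doubly stochastic matrices, each entry lies in $[0,1]$ and the row- and column-sum conditions are fixed linear constraints, so $\mathcal{D}_n$ is a closed and bounded subset of $\mathbb{R}^{n \times n}$, hence compact. Since $f$ is continuous, it attains a minimum value $f^* = f(T^*)$ at some $T^* \in \mathcal{D}_n$.

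Next I would decompose $T^*$. By the Birkhoff-von Neumann theorem, $T^*$ lies in the convex hull of the permutation matrices, so there exist permutation matrices $P_1, \dots, P_m$ and coefficients $\lambda_1, \dots, \lambda_m \ge 0$ with $\sum_k \lambda_k = 1$ and $T^* = \sum_k \lambda_k P_k$. By linearity of $f$,
\[
   f(T^*) = \sum_k \lambda_k f(P_k).
\]
Each $P_k$ is itself doubly stochastic, so $f(P_k) \ge f^* = f(T^*)$ by optimality of $T^*$. If every $f(P_k)$ with $\lambda_k > 0$ were strictly greater than $f^*$, the convex combination above would exceed $f^*$, a contradiction; hence $f(P_k) = f^*$ for at least one $k$ with $\lambda_k > 0$. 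That permutation matrix $P_k$ is therefore an optimal solution, which proves the claim.

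Since the argument is essentially a direct application of Birkhoff-von Neumann, I do not anticipate a serious obstacle. The only point requiring a little care is the final averaging step: one must observe that a convex combination of values all $\ge f^*$ can equal $f^*$ only if some value (indeed every value carrying positive weight) equals $f^*$. Equivalently, one could bypass the explicit decomposition and simply invoke the general theorem that a linear function over a polytope attains its minimum at an extreme point, together with the identification of the extreme points of $\mathcal{D}_n$ as the permutation matrices.
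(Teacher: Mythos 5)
Your proposal is correct and follows essentially the same route as the paper's own proof: both establish existence of an optimum by compactness, apply the Birkhoff--von Neumann decomposition $T^* = \sum_k \lambda_k P_k$, and use linearity of the objective together with the averaging argument to conclude that some permutation matrix with positive weight must attain the minimum. The only difference is presentational --- the paper frames the final step as a proof by contradiction (assuming no permutation matrix is optimal), whereas you argue directly --- but the underlying mechanism is identical.
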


\begin{proof}
We prove this by contradiction. Let $T^*$ be an optimal $n \times n$ solution matrix. We know that such a solution exists by the Birkhoff-von Neumann Theorem (since the set of solutions is closed and bounded). We assume firstly that $T^*$ is not a permutation matrix, and secondly that no permutation matrix is optimal. Let $\{P^1, P^2, \ldots, P^k\}$ be the set of $n \times n$ permutation matrices. Then, by the Birkhoff-von Neumann theorem, we can write 
\begin{align}
    T^* = \lambda_1 P^1 + \lambda_2 P^2 + \ldots + \lambda_k P^k
\label{eqn-bvn}
\end{align}
where $\lambda_i \ge 0$ and $\sum\limits_{i = 1}^k \lambda_i = 1$. Since $T^*$ is optimal and none of the $P^i$ are optimal, we can also write
\begin{align*}
              \sum\limits_{i,j} P^m_{ij}C_{ij} &> \sum\limits_{i,j} T^*_{ij}C_{ij} \quad \text{(by assumption)}                                                    
\end{align*}
for $0 < m \le k$. And thus we have
\begin{align*}
           \sum\limits_{i,j} T^*_{ij} C_{ij} &= \sum\limits_{i,j} (\lambda_1 P^1_{ij} + \ldots + \lambda_k P^k_{ij}) C_{ij} \quad \text{(from~\ref{eqn-bvn})} \\
                                       &= \sum\limits_{i,j} \lambda_1 P^1_{ij} C_{ij} +  \ldots + \sum\limits_{i,j} \lambda_k P^k_{ij} C_{ij} \quad \text{(factorising)} \\
                                       &>  \sum\limits_{i,j} \lambda_1 T^*_{ij} C_{ij} + \ldots + \sum\limits_{i,j} \lambda_k T^*_{ij} C_{ij} \quad \text{(by assumption)} \\
                                       &= \lambda_1  \sum\limits_{i,j} T^*_{ij} C_{ij} + \ldots + \lambda_k \sum\limits_{i,j} T^*_{ij} C_{ij} \quad \text{(arithmetic)} \\
                                       &= (\lambda_1 + \ldots + \lambda_k) \sum\limits_{i,j} T^*_{ij} C_{ij} \quad \text{(factorising)} \\
                                       &= \sum\limits_{i,j} T^*_{ij} C_{ij} \quad \text{(since $\sum\limits_{i = 1}^k \lambda_i = 1$)} %\quad \text{\Lightning} 
\end{align*}
which is a contradiction. Thus, either $T^*$ is a permutation matrix, or there must be a permutation matrix which is also optimal.
\end{proof}

The following lemma is useful in proving the next main result on document indistinguishability.

%-------------------------------------- LEMMA ---------------------------------------

\begin{lemma}{}
Let $K: \mathcal{W} \rightarrow \mathbb{P}(\mathcal{W})$ be a mechanism operating on word vectors and let $d, z \in \mathcal{U}(\mathcal{W})$ be documents of length $n$. Then $K$ can be extended to a mechanism $K^*: \mathcal{U}(\mathcal{W}) \rightarrow \mathbb{P}(\mathcal{U}(\mathcal{W}))$ operating on documents such that
\[
    K^*(d)(z) = \sum_i K(w_1)(v_{\phi_i(1)}) \times K(w_2)(v_{\phi_i(2)}) \times \ldots \times K(w_n)(v_{\phi_i(n)})
\]
where the $w_i, v_i$ are words (arbitrarily labelled) in $d, z$ respectively, the $\phi_i$ are permutation functions, and the sum is over unique permutations of words in $z$. % TODO: Need to define phi properly
\label{lemma-extend}
\end{lemma}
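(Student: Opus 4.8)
The plan is to construct $K^*$ as the pushforward (marginalisation) of a product mechanism acting on ordered documents, and then read off the stated formula as the explicit description of this marginalisation. First I would fix an arbitrary ordering of the input BOW, writing $d = (w_1, \ldots, w_n)$ as a document vector, and define a mechanism on ordered outputs by applying $K$ independently and coordinatewise: the probability of producing the ordered output $(u_1, \ldots, u_n) \in \mathcal{O}(\mathcal{W})$ is $\prod_{i=1}^n K(w_i)(u_i)$. Since the $n$ draws are independent, this is a genuine probability distribution on $\mathcal{O}(\mathcal{W})$.

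Next I would define $K^*(d)$ as the image of this ordered distribution under the forgetful map $\mathcal{O}(\mathcal{W}) \to \mathcal{U}(\mathcal{W})$ that discards word order. Concretely, the probability assigned to a BOW $z$ is the sum of $\prod_i K(w_i)(u_i)$ over all ordered tuples $(u_1, \ldots, u_n)$ whose underlying multiset equals $z$. Labelling the words of $z$ as $v_1, \ldots, v_n$, these tuples are exactly the distinct rearrangements $(v_{\phi(1)}, \ldots, v_{\phi(n)})$ of $z$; summing over the \emph{unique} such permutations $\phi_i$ (so that a repeated word in $z$ is not double-counted) yields precisely the claimed identity.

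Two checks then complete the argument. I would verify that $K^*(d)$ is a probability measure by summing over all BOW outputs $z$: since every ordered tuple maps to exactly one BOW, this collapses to $\sum_{(u_1,\ldots,u_n)} \prod_i K(w_i)(u_i) = \prod_i \sum_{u} K(w_i)(u) = 1$, using that each $K(w_i)$ is a distribution. I would also verify that the construction does not depend on the ordering chosen for $d$: permuting the $w_i$ merely reindexes the sum over all permutations $\phi$, leaving the value unchanged, so $K^*$ is well defined on unordered inputs.

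The main obstacle I anticipate is purely combinatorial bookkeeping rather than anything structural: being precise about the phrase \emph{unique permutations} when either $d$ or $z$ contains repeated words, since in that case several elements of the symmetric group $S_n$ produce the same ordered tuple and must be counted once. Handling this carefully --- for instance by indexing the sum over the set of distinct rearrangements of the multiset $z$ rather than over all of $S_n$ --- is what makes the displayed formula exactly correct rather than off by a multiplicity factor.
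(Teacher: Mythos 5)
Your proposal is correct and follows essentially the same route as the paper: build the product mechanism on ordered document vectors by applying $K$ independently to each word, then obtain $K^*$ by collecting (marginalising) the output probabilities over the distinct rearrangements of the output bag of words. You are in fact slightly more explicit than the paper on two points it only asserts --- that the total mass sums to $1$ because each ordered tuple maps to exactly one BOW, and that independence of the input ordering follows by reindexing the sum over permutations --- and your care about multiplicities for repeated words is exactly what the paper's phrase ``unique permutations'' is meant to capture.
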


\begin{proof}
Choose an arbitrary ordering of words in $d$ and $z$ and let $\vec{d} = <w_1, w_2, \ldots, w_n>$, $\vec{z} = <v_1, v_2, \ldots, v_n>$ be the corresponding document vectors. Let $\phi_i: S \rightarrow S$ be a sequence of permutation functions over $S = \{1, \ldots, n\}$ for $i = \{1, \ldots, m\}$ such that $<v_{\phi_i(1)}, v_{\phi_i(2)}, \ldots, v_{\phi_i(n)}>$ is a unique permutation of words in $z$ for each $i$.

Now, it is straightforward to extend $K$ to a mechanism $K': \mathcal{O}(\mathcal{W}) \rightarrow \mathbb{P}(\mathcal{O}(\mathcal{W}))$ operating on document \emph{vectors}, since we can simply apply $K$ to each word in order. That is,
\[
     K'(\vec{d})(\vec{z}) = K(w_1)(v_1) \times K(w_2)(v_2) \times \ldots \times K(w_n)(v_n)
\]

Clearly, $K'(\vec{d})$ defines a valid probability distribution for any $\vec{d}$ since we sum over all possible output vectors $\vec{z}$.

In order to extend this to a mechanism over documents, observe that the mechanism $K'$ produces the same output distribution regardless of the ordering of words in the document vector $\vec{d}$ (since the mechanism $K$ operates on each word independently). Therefore we only need to consider permutations of words in the output document vector $\vec{z}$. The distribution over documents is then given by the sum of distributions over each permutation of words in the output vector $\vec{z}$, that is,
\[
    K^*(d)(z) = \sum_i K(w_1)(v_{\phi_i(1)}) \times K(w_2)(v_{\phi_i(2)}) \times \ldots \times K(w_n)(v_{\phi_i(n)})
\]

Clearly $K^*(d)$ also defines a valid probability distribution, since it produces the same distribution as $K'(\vec{d})$ except that the output probabilities are `collected' for all permutations of the output vector.
\qed
\end{proof}

\begin{theorem}{}
Let $K: \mathbb{R}^k \rightarrow \mathbb{P}(\mathbb{R}^k)$ be a mechanism operating on real-valued $k$-dimensional vectors which satisfies $\epsilon d_2$-privacy. Then $K$ can be extended to a mechanism $K^*: \mathcal{U}(\mathcal{W}) \rightarrow \mathbb{P}(\mathcal{U}(\mathcal{W}))$ operating on documents which satisfies $\epsilon$-document-indistinguishability.
\end{theorem}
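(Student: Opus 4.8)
The plan is to establish the pointwise inequality $K^*(x)(z) \le e^{\epsilon d_W(x,x')}\, K^*(x')(z)$ for every pair of equal-length input documents $x,x'$ and every output document $z$, and then recover the set form required by Definition~\ref{DI} by summing over $z \in Z$: since the exponential factor does not depend on $z$, it survives the sum and gives $K^*(x)(Z) \le e^{\epsilon d_W(x,x')} K^*(x')(Z)$. I would work throughout with documents of a common length $n$, which is exactly the regime produced by the fixed-length preprocessing of the mechanism and the one in which both Theorem~\ref{wmd-thm} and Lemma~\ref{lemma-extend} apply. A preliminary remark handles the type mismatch between the $\mathbb{R}^k$-mechanism of the statement and the word-level mechanism of Lemma~\ref{lemma-extend}: the ``closest word'' step is a deterministic post-processing of the noisy vector, and post-processing preserves the $\epsilon d_2$-privacy guarantee, so I may treat $K$ as a word mechanism $\mathcal{W} \to \mathbb{P}(\mathcal{W})$ without loss.

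First I would invoke Lemma~\ref{lemma-extend} to write both $K^*(x)(z)$ and $K^*(x')(z)$ as sums over the unique permutations $\phi_i$ of a fixed labelling $v_1,\dots,v_n$ of the words of $z$. Fixing an arbitrary ordering $w'_1,\dots,w'_n$ of $x'$, and using the fact (established inside the proof of Lemma~\ref{lemma-extend}) that $K^*$ is invariant under reordering of the \emph{source} document, I would choose the ordering $w_1,\dots,w_n$ of $x$ so that $w_j$ is paired with $w'_j$. Any pairing is admissible, and the choice will be optimised at the very end. For a single output permutation $\phi_i$, applying the $\epsilon d_2$-privacy of $K$ to each word independently (treating each word embedding vector as a point of $\mathbb{R}^k$) yields $\prod_{j} K(w_j)(v_{\phi_i(j)}) \le e^{\epsilon \sum_j d_2(w_j,w'_j)} \prod_{j} K(w'_j)(v_{\phi_i(j)})$.

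The crucial observation is that the factor $e^{\epsilon \sum_j d_2(w_j,w'_j)}$ depends only on the pairing between the source words, and not on the output permutation $\phi_i$. Hence it is common to every term of the sum from Lemma~\ref{lemma-extend} and factors out, giving $K^*(x)(z) \le e^{\epsilon \sum_j d_2(w_j,w'_j)}\, K^*(x')(z)$. It then remains only to make the exponent as small as possible by choosing the best pairing, that is, to evaluate $\min_\sigma \sum_j d_2(w_j, w'_{\sigma(j)})$ over permutations $\sigma$. This is where Theorem~\ref{wmd-thm} is needed: for two equal-length documents an optimal transport plan can be taken to move whole words, i.e. to be a permutation, so the combinatorial minimum over pairings is exactly the optimal transport cost and coincides with $d_W(x,x')$ (up to the uniform mass rescaling of Figure~\ref{wmd2}). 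Substituting then gives $K^*(x)(z) \le e^{\epsilon d_W(x,x')}\, K^*(x')(z)$, as desired.

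The main obstacle is this final identification step. I must argue that the minimum over word pairings delivered by the composition bound really \emph{is} the Word Mover's Distance, rather than merely an upper bound: a priori the composition only compares documents along a single permutation, whereas $d_W$ optimises over all (possibly fractional) flows. Theorem~\ref{wmd-thm} supplies precisely the missing direction, guaranteeing that restricting to whole-word pairings loses nothing for equal-length documents. The delicate bookkeeping is the mass normalisation: the composition produces the exponent $\epsilon \min_\sigma \sum_j d_2(w_j,w'_{\sigma(j)})$, which is the transport cost under unit word weights, whereas $d_W$ in the Word Mover's Distance definition uses weights $1/n$; the rescaling argument illustrated in Figure~\ref{wmd2} is exactly what relates these two quantities, and verifying that the resulting constant matches the $\epsilon$ in Definition~\ref{DI} is the step that requires the most care.
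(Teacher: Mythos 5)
Your proposal follows essentially the same route as the paper's proof: extend $K$ word-wise via Lemma~\ref{lemma-extend}, apply $\epsilon d_2$-privacy to each word so that the factor $e^{\epsilon\sum_j d_2(w_j,w'_j)}$ is independent of the output permutation and factors out of the sum, and then use Theorem~\ref{wmd-thm} to identify the minimising word pairing with the Word Mover's Distance. The one place where you are more careful than the paper is the mass normalisation you flag at the end: the paper simply asserts $d_W(\vec{d},\vec{d'})=\sum_k d_2(w_k,w'_k)$, whereas under the stated definition of $d_W$ (with marginals $\tfrac{1}{n}$) the optimal cost carries a factor of $\tfrac{1}{n}$, so the composition bound is really $e^{\epsilon n\, d_W(x,x')}$ unless the metric is rescaled --- a discrepancy the paper's own proof silently elides.
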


\begin{proof}
Since we have a real-valued vector representation for words, we can treat $K$ as a mechanism operating on word vectors. Let $\vec{d} = <w_1, w_2, \ldots, w_n>$, $\vec{z} = <v_1, v_2, \ldots, v_n>$ be $n$-dimensional document vectors and let $\phi_i: S \rightarrow S$ be a sequence of permutation functions over $S = \{1, \ldots, n\}$ for $i = \{1, \ldots, m\}$. From Lemma~\ref{lemma-extend}, we can extend $K$ to a mechanism $K^*: \mathcal{U}(\mathcal{W}) \rightarrow \mathbb{P}(\mathcal{U}(\mathcal{W}))$ satisfying
\[
    K^*(\vec{d})(\vec{z}) = \sum_i K(w_1)(v_{\phi_i(1)}) \times K(w_2)(v_{\phi_i(2)}) \times \ldots \times K(w_n)(v_{\phi_i(n)})
\]

We can choose any particular ordering of words in $\vec{d}$ since the ordering of words is arbitrary. Fix any ordering of words in $\vec{d}$ and let $\vec{d'} = <w'_1, w'_2, \ldots, w'_n>$ be an $n$-dimensional document vector, where the word order in $\vec{d'}$ is chosen to minimise the sum of the Euclidean distances between corresponding words in $\vec{d}$ and $\vec{d'}$. That is, we choose an ordering which minimises $\sum\limits_{i=1}^n d_2(w_i, w'_i)$.

Now, for the document vectors $\vec{d}$, $\vec{d'}$, we have that
\begin{align}
    \frac{K^*(\vec{d})(\vec{z})}{K^*(\vec{d'})(\vec{z})} &= \frac{ \sum_i K(w_1)(v_{\phi_i(1)}) \times K(w_2)(v_{\phi_i(2)}) \times \ldots \times K(w_n)(v_{\phi_i(n)}) }{ \sum_i K(w'_1)(v_{\phi_i(1)}) \times K(w'_2)(v_{\phi_i(2)}) \times \ldots \times K(w'_n)(v_{\phi_i(n)}) }  \nonumber \\
              &= \frac{ \sum_i \prod_j K(w_j)(v_{\phi_i(j)})}{ \sum_i \prod_j K(w'_j)(v_{\phi_i(j)})} 
    \label{vecs}
\end{align}
where the number of terms in the numerator and denominator is the same (since this depends only on $\vec{z}$).
But we also have that
\begin{align}
    K(w_k)(v_{\phi_i(k)}) \le K(w'_k)(v_{\phi_i(k)}) e^{\epsilon d_2(w_k, w'_k)}  \quad \text{($\epsilon d_2$-privacy)} \label{d2}
\end{align}
for all words in $\vec{d}$, $\vec{d'}$ and all permutations $\phi_i$.
Therefore,
\begin{align}
    \frac{K^*(\vec{d})(\vec{z})}{K^*(\vec{d'})(\vec{z})} &= \frac{ \sum_i \prod_j K(w_j)(v_{\phi_i(j)}) }{ \sum_i \prod_j K(w'_j)(v_{\phi_i(j)}) } \quad \text{(from ~\ref{vecs})} \nonumber \\
                        &\le \frac{ \sum_i \prod_j K(w'_j)(v_{\phi_i(j)}) e^{\epsilon d_2(w_j, w'_j)} }{\sum_i \prod_j K(w'_j)(v_{\phi_i(j)}) } \quad \text{(from ~\ref{d2})}  \nonumber \\
                        &= \frac{ \sum_i e^{\epsilon (d_2(w_1, w'_1) + \ldots + d_2(w_n, w'_n))} \prod_j K(w'_j)(v_{\phi_i(j)}) }{ \sum_i \prod_j K(w'_j)(v_{\phi_i(j)})} \quad \text{(arithmetic)}  \nonumber \\
                        &= \frac{ e^{\epsilon (d_2(w_1, w'_1) + \ldots + d_2(w_n, w'_n))} \sum_i \prod_j K(w'_j)(v_{\phi_i(j)})}{ \sum_i \prod_j K(w'_j)(v_{\phi_i(j)})} \quad \text{(arithmetic)}  \nonumber \\
                        &= e^{\epsilon (d_2(w_1, w'_1) + \ldots + d_2(w_n, w'_n))} \quad \text{(cancelling like terms)}
\label{final}
\end{align} 

Now, notice that the documents $\vec{d}$ and $\vec{d'}$ have the same dimension (necessarily, due to the operation of the mechanism $K^*$). Therefore we know from Theorem~\ref{wmd-thm} that the Word Mover's Distance $d_\mathcal{W}(\vec{d}, \vec{d'})$ has an optimal solution involving the movement of whole words. That is, there exists a permutation of words $<w'_{\phi_i(1)}, w'_{\phi_i(2)}, \ldots, w'_{\phi_i(n)}>$ in $\vec{d'}$ such that $d_\mathcal{W}(\vec{d}, \vec{d'}) = \sum\limits_k d_2(w_k, w'_{\phi_i(k)})$. But we chose an ordering of words in $\vec{d'}$ that minimises $\sum\limits_k d_2(w_k, w'_k)$. Recalling that the Word Mover's Distance is minimal, we therefore must have that 

\[
    d_\mathcal{W}(\vec{d}, \vec{d'}) = \sum\limits_k d_2(w_k, w'_k)
\]

And so,
\begin{align*}
   \frac{K^*(\vec{d})(\vec{z})}{K^*(\vec{d'})(\vec{z})} &\le e^{\epsilon (d_2(w_1, w'_1) + \ldots + d_2(w_n, w'_n))} \quad \text{(from ~\ref{final})} \\
             &= e^{\epsilon d_W(\vec{d}, \vec{d'})}
\end{align*}
Thus the mechanism $K^*$ satisfies $\epsilon$-document-indistinguishability.
\qed
\end{proof}

\section{Proofs Omitted from Section 5}

We present here a more complete proof of the derivation of the $n$-dimensional Laplace mechanism. 

We consider the distribution $D_0(z) = ce^{- \epsilon \sqrt{z^2_1 + z^2_2 + \ldots + z^2_n}}$ for some constant $c$.

In order to select a point from this distribution, we consider the CDF
\begin{align}
\label{cdf}
    F(z) = \idotsint\limits_{Z_A} ce^{- \epsilon \sqrt{z^2_1 + z^2_2 + \ldots + z^2_n}} \,dz_1 \dots dz_n
\end{align}
for some region of interest $Z_A$.

To compute this we require a change of co-ordinates. We can convert from Cartesian co-ordinates to spherical co-ordinates 
\[
    (z_1, z_2, z_3, \ldots, z_n) \mapsto (r, \theta_1, \theta_2, \ldots, \theta_{n-1})
\]

as stated in \citep{mustard1964} using a transformation $r = \sqrt{z^2_1 + z^2_2 + \ldots + z^2_n}$ with inverse
\begin{align*}
    z_1 &= r \cos\theta_1 \\
    z_2 &= r \sin\theta_1 \cos\theta_2 \\
    z_3 &= r \sin\theta_1 \sin\theta_2 \cos\theta_3 \\
    &\ldots \\
    z_{n-1} &= r \sin\theta_1 \sin\theta_2 \ldots \sin\theta_{n-2} \cos\theta_{n-1} \\
    z_n &= r \sin\theta_1 \sin\theta_2 \ldots \sin\theta_{n-2} \sin\theta_{n-1}     
\end{align*}

We also need to calculate the matrix of partial derivatives to get the Jacobian, which is well known to be
\[
    \frac{\partial(z_1, z_2, \ldots, z_n)}{\partial(r, \theta_1, \ldots, \theta_{n-1})} = r^{n-1} \sin^{n-2}\theta_1 \sin^{n-3}\theta_2 \ldots \sin^2\theta_{n-3} \sin\theta_{n-2}
\]

And therefore the integral in (\ref{cdf}) becomes

\begin{align*}
  &\idotsint\limits_{Z_A} ce^{- \epsilon \sqrt{z^2_1 + z^2_2 + \ldots + z^2_n}} \,dz_1 \dots dz_n \\
  &  \quad = \idotsint\limits_{Z_A} ce^{-\epsilon r} r^{n-1} \sin^{n-2}\theta_1 \sin^{n-3}\theta_2 \ldots \sin^2\theta_{n-3} \sin\theta_{n-2} dr d\theta_1 \dots d\theta_{n-1} \\
  &  \quad = \int\limits_{0}^{R} c_1r^{n-1}e^{-\epsilon r} dr \int\limits_0^\pi c_2\sin^{n-2}\theta_1 d\theta_1 \int\limits_0^\pi c_3\sin^{n-3}\theta_2 d\theta_2 \dots \int\limits_0^{\pi} c_{n-1}\sin\theta_{n-2} d\theta_{n-2} \int\limits_0^{2\pi} c_n d\theta_{n-1} \\
\end{align*}
where $\prod\limits_{k=1}^n c_k = c$.

We note that this is a product of independent distributions, and thus the co-ordinates (radius and angles) can be selected independently. We also require that each integral sums to 1 to get valid probability distributions, so firstly we can calculate the constant $c_1$ using

\begin{align}
\label{pdf}
   \int\limits_{0}^{\infty} c_1r^{n-1}e^{-\epsilon r} dr = 1
\end{align}

Using integration by parts we find
\[
     \int\limits_{0}^{\infty} c_1r^{n-1}e^{-\epsilon r} dr = \frac{c_1}{\epsilon^n} (n-1)!
\]

And thus the integral in (\ref{pdf}) becomes
\[
     \int\limits_{0}^{\infty} \frac{\epsilon ^n}{(n-1)!} r^{n-1} e^{-\epsilon r} dr = \int\limits_{0}^{\infty} \frac{\epsilon ^n}{\Gamma(n)} r^{n-1} e^{-\epsilon r} dr
\]
which we recognise as the PDF of the Gamma distribution
\[
     f(x, k; \theta) = \frac{1}{\Gamma(k) \theta^k} x^{k-1} e^{-\frac{x}{\theta}}
\]
where $x \mapsto r$, $k \mapsto n$ and $\theta \mapsto \frac{1}{\epsilon}$.

Now, we also note that the remaining integrals (over the angles $\theta_1, \dots \theta_{n-1}$) correspond to the Jacobian for the unit n-sphere. In other words, this is equivalent to the problem of selecting a point uniformly over the surface of the $n$-sphere. 

\backmatter
\bibliography{MathBib}

\begin{thebibliography}{6}
\providecommand{\natexlab}[1]{#1}
\providecommand{\url}[1]{\texttt{#1}}
\expandafter\ifx\csname urlstyle\endcsname\relax
  \providecommand{\doi}[1]{doi: #1}\else
  \providecommand{\doi}{doi: \begingroup \urlstyle{rm}\Url}\fi

\bibitem[Andr{\'e}s et~al.(2013)Andr{\'e}s, Bordenabe, Chatzikokolakis, and
  Palamidessi]{andres2013geo}
Andr{\'e}s, M.~E., Bordenabe, N.~E., Chatzikokolakis, K., and Palamidessi, C.
\newblock Geo-indistinguishability: Differential privacy for location-based
  systems.
\newblock In \emph{Proceedings of the 2013 ACM SIGSAC conference on Computer \&
  communications security}, pages 901--914 ACM.
\newblock (2013).

\bibitem[Chatzikokolakis et~al.(2013)Chatzikokolakis, Andr{\'e}s, Bordenabe,
  and Palamidessi]{chatzikokolakis2013broadening}
Chatzikokolakis, K., Andr{\'e}s, M.~E., Bordenabe, N.~E., and Palamidessi, C.
\newblock Broadening the scope of differential privacy using metrics.
\newblock In \emph{International Symposium on Privacy Enhancing Technologies
  Symposium}, pages 82--102 Springer.
\newblock (2013).

\bibitem[Kroese et~al.(2013)Kroese, Taimre, and Botev]{kroese2013handbook}
Kroese, D.~P., Taimre, T., and Botev, Z.~I.
\newblock \emph{Handbook of monte carlo methods}, volume 706.
\newblock John Wiley \& Sons, (2013).

\bibitem[Marsaglia et~al.(1972)]{marsaglia1972choosing}
Marsaglia, G. et~al.
\newblock Choosing a point from the surface of a sphere.
\newblock \emph{The Annals of Mathematical Statistics}, 43\penalty0
  (2):\penalty0 645--646 (1972).

\bibitem[Mustard(1964)]{mustard1964}
Mustard, D.
\newblock Numerical integration over the n-dimensional spherical shell.
\newblock \emph{Mathematics of Computation}, 18\penalty0 (88):\penalty0
  578--589 (1964).

\bibitem[Weggenmann and Kerschbaum(2018)]{weggenmann2018syntf}
Weggenmann, B. and Kerschbaum, F.
\newblock Syntf: Synthetic and differentially private term frequency vectors
  for privacy-preserving text mining.
\newblock \emph{arXiv preprint arXiv:1805.00904} (2018).

\end{thebibliography}
\end{document}